%
\documentclass{llncs}
\usepackage[pst,vaucanson-g]{xcolor}
\usepackage{vaucanson-g}
\usepackage{pstricks}
\usepackage{pst-tree,pst-node}
\usepackage{amsfonts}
\usepackage{url}
\usepackage{array}
\usepackage[labelsep=quad,indention=10pt]{subfig}
\captionsetup*{position=top}
\usepackage{makecell}
\usepackage{tikz}
\usepackage{colortbl}
  \newcommand{\defproblem}[3]{
  \vspace{2mm}
\noindent\fbox{
  \begin{minipage}{0.96\textwidth}
  #1\\
  {\bf{Input:}} #2  \\
  {\bf{Output:}} #3
  \end{minipage}
  }
  \vspace{2mm}
}

\usepackage[algo2e, noend, noline, boxed]{algorithm2e}
  \SetKwComment{tcm}{\{}{\}}
   \newenvironment{myfunction}[2][htbp]
  {%
    \setlength{\algomargin}{.2cm}
    \begin{center}
    \begin{minipage}{#2}
    \begin{function}[#1]
    \small
     \let\Par=\par
       \def\par{\endgraf\vspace{.1cm}}
           \SetKw{To}{to}%
       \SetKw{Downto}{downto}%
           \SetKw{Or}{or}%
       \SetKwFor{Algo}{Function}{}{}%
      \vspace{.15cm}%
   }
   {%
     \let\par=\Par\end{function}%
     \end{minipage}%
     \end{center}%
   }

\newcommand{\MAW}{\textsc{MinimalAbsentWords}}
\def\dd{\mathinner{.\,.}}
\newcommand{\cO}{\mathcal{O}}
\newcommand{\SA}{\textsf{SA}}
\newcommand{\iSA}{\textsf{iSA}}
\newcommand{\LCP}{\textsf{LCP}}
\newcommand{\lcp}{\textsf{lcp}}
\newcommand{\Before}{\textsf{Before}}
\newcommand{\BeforeLCP}{\textsf{BeforeLCP}}
\newcommand{\Interval}{\textsf{Interval}}
\newcommand{\Lifolcp}{\textsf{LifoLCP}}
\newcommand{\Liforem}{\textsf{LifoRem}}

\begin{document}
\frontmatter          
%
%
\title{Linear-time Computation of Minimal Absent Words Using Suffix Array}

\author{Carl Barton\inst{1}
\and Alice Heliou\inst{2,3}
\and Laurent Mouchard\inst{4}
\and Solon P.\ Pissis\inst{1}
}

\institute{$\!^1\ $Department of Informatics, King's College London, London, UK \\ 
\email{\{carl.barton,solon.pissis\}@kcl.ac.uk} \\ 
$\!^2\ $Inria Saclay-\^{I}le de France, AMIB,  B\^{a}timent Alan Turing, France\\
$\!^3\ $Laboratoire d'Informatique de l'\'{E}cole Polytechnique (LIX), CNRS UMR 7161, France\\
\email{alice.heliou@polytechnique.org}\\
$\!^4\ $University of Rouen, LITIS EA 4108, TIBS, Rouen, France\\
\email{laurent.mouchard@univ-rouen.fr}}

\maketitle

\begin{abstract}
  An \textit{absent word} of a word $y$ of length $n$ is a word that does not occur in $y$.
  It is a \textit{minimal absent word} if all its proper factors occur in $y$. 
  Minimal absent words have been computed in genomes of organisms from all domains of life; their computation provides a fast alternative for measuring approximation in sequence comparison.
  There exists an $\cO(n)$-time and $\cO(n)$-space algorithm for computing all minimal absent words on a fixed-sized alphabet based on the construction of suffix automata (Crochemore et al., 1998).
  No implementation of this algorithm is publicly available. There also exists an $\cO(n^2)$-time and $\cO(n)$-space algorithm for the same problem based on the construction of suffix arrays (Pinho et al., 2009). 
  An implementation of this algorithm was also provided by the authors and is currently the fastest available.
  In this article, we bridge this unpleasant gap by presenting an $\cO(n)$-time and $\cO(n)$-space algorithm for computing all minimal absent words based on the construction of suffix arrays.
  Experimental results using real and synthetic data show that the respective implementation outperforms the one by Pinho et al.
\end{abstract}
\section{Introduction}
\label{sec:intro}
  Sequence comparison is an important step in many important tasks in bioinformatics.
  It is used in many applications; from phylogenies reconstruction to the reconstruction of genomes.
  Traditional techniques for measuring approximation in sequence comparison are based on the notions of distance or of similarity between sequences;
  and these are computed through sequence alignment techniques.
  An issue with using alignment techniques is that they are computationally expensive: they require quadratic time in the length of the sequences.
  Moreover, in molecular taxonomy and phylogeny, for instance, whole-genome alignment proves both computationally expensive and hardly significant.
  These observations have led to increased research into \textit{alignment free} techniques for sequence comparison.
  A number of alignment free techniques have been proposed: in~\cite{HauboldPMW05}, a 
  method based on the computation of the shortest unique factors of each sequence is proposed; other approaches estimate the number of mismatches per site based on the length 
  of exact matches between pairs of sequences~\cite{Domazet-Loso:2009:EEP:1671627.1671629}.
  
  Thus standard notions are gradually being complemented (or even supplanted) by other measures that refer, implicitly or explicitly, to the 
  composition of sequences in terms of their constituent patterns. One such measure is the notion of words absent in a sequence.
  A word is an \textit{absent word} of some sequence if it does not occur in the sequence. 
  These words represent a type of \textit{negative information}: information about what does not occur in the sequence.
  Noting the words which do occur in one sequence but do not occur in another can be used to detect mutations or other biologically significant events.

  Given a sequence of length $n$, the number of absent words of length at most $n$ can be exponential in $n$, meaning that using all the absent words for 
  sequence comparison is more expensive than alignments. However, the number of certain subsets of absent words is only linear in $n$.
  An absent word of a sequence is a \textit{shortest absent word} if all words shorter than it do occur in the sequence.
  An $\cO(mn)$-time algorithm for computing shortest absent words was presented in~\cite{Hampikian_absentsequences:}, 
  where $m$ is a user-specified threshold on the length of the shortest absent words. This was later improved by~\cite{abwords}, who presented an $\cO(n \log \log n)$-time algorithm for the same problem. 
  This has been further improved and an $\cO(n)$-time algorithm was presented in~\cite{Wu2010596}.
  
  A \textit{minimal absent word} of a sequence is an absent word whose proper factors all occur in the sequence.
  Notice that minimal absent words are a superset of shortest absent words~\cite{Pinho2009}. 
  An upper bound on the number of minimal absent words is $\cO(\sigma n)$~\cite{Crochemore98automataand,Mignosi02}, where $\sigma$ is the size of the alphabet.
  This suggests that it may be possible to compare sequences in time proportional to their lengths, for a fixed-sized alphabet, instead of proportional to
  the product of their lengths~\cite{HauboldPMW05}. 
  
  Recently, there has been a number of biological studies on the significance of absent words.
  The most comprehensive study on the significance of absent words is probably~\cite{nullrly}; in this, the authors suggest that the deficit of certain subsets of absent words 
  in vertebrates may be explained by the hypermutability of the genome. It was later found in~\cite{minabpro} that the compositional biases observed in~\cite{nullrly} for vertebrates 
  are not uniform throughout different sets of minimal absent words. Moreover, the analyses in~\cite{minabpro} support the hypothesis of the inheritance of minimal absent words through a common ancestor, 
  in addition to lineage-specific inheritance, only in vertebrates. In~\cite{citeulike:10180507}, the minimal absent words in four human genomes were computed, and it was shown that, as expected, 
  intra-species variations in minimal absent words were lower than inter-species variations. Minimal absent words have also been used for phylogenies reconstruction~\cite{Chairungsee2012109}.
  
  From an algorithmic perspective, an $\cO(n)$-time and $\cO(n)$-space algorithm for computing all minimal absent words on a fixed-sized alphabet based on the construction of suffix automata was presented 
  in~\cite{Crochemore98automataand}. An alternative $\cO(n)$-time solution for finding minimal absent words of length at most $\ell$, such that $\ell = \cO(1)$, based on the construction of tries of 
  bounded-length factors was presented in~\cite{Chairungsee2012109}. A drawback of these approaches, in practical terms, is that the construction of suffix automata (or of tries) 
  may have a large memory footprint. Due to this, an important problem is to be able to compute the minimal absent words of a sequence without the use of data structures such as the suffix automaton.
  To this end, the computation of minimal absent words based on the construction of suffix arrays was considered in~\cite{Pinho2009}; although fast in practice, the worst-case runtime of this algorithm is $\cO(n^2)$.
  Alternatively, one could make use of the succinct representations of the bidirectional BWT, recenlty presented in~\cite{Belazzougui2013}, to compute all minimal absent words in time $\cO(n)$.
  However, an implementation of these representations was not made available by the authors; and it is also rather unlikely that such an implementation will outperform an $\cO(n)$-time algorithm based on
  the construction of suffix arrays.

  \noindent \textit{Our Contribution}: In this article, we bridge this unpleasant gap by presenting the first $\cO(n)$-time and $\cO(n)$-space algorithm for computing all minimal absent words of a sequence of length $n$ 
  based on the construction of suffix arrays. The respective implementation is also provided and shown to be more efficient than existing tools.
\section{Definitions and Notation}
  To provide an overview of our result and algorithm, we begin with a few
definitions.
  Let $y=y[0]y[1]\dd y[n-1]$ be a \textit{word} of \textit{length} $n=|y|$
over a finite ordered \textit{alphabet} $\Sigma$ of size 
$\sigma = |\Sigma|=\cO(1)$.
  We denote by $y[i\dd j]=y[i]\dd y[j]$ the \textit{factor} of $y$ that 
starts at position $i$ and ends at position $j$ and by $\varepsilon$ 
the \textit{empty word}, word of length 0. 
  We recall that a prefix of $y$ is a factor that starts at position 0 
($y[0\dd j]$) and a suffix is a factor that ends at position $n-1$ 
($y[i\dd n-1]$), and that a factor of $y$ is a \textit{proper} factor if 
it is not the empty word or $y$ itself.

  Let $x$ be a word of length $0<m\leq n$. 
  We say that there exists an \textit{occurrence} of $x$ in $y$, or, more 
simply, that $x$ \textit{occurs in} $y$, when $x$ is a factor of $y$.
  Every occurrence of $x$ can be characterised by a starting position in $y$. 
  Thus we say that $x$ occurs at the \textit{starting position} $i$ in $y$ 
when $x=y[i \dd i + m - 1]$.
  Opposingly, we say that the word $x$ is an \textit{absent word} of
$y$ if it does not occur in $y$.
  The absent word $x$, $m \geq 2$, of $y$ is \textit{minimal} if and only if all its proper factors 
occur in $y$.

  We denote by \SA{} the {\em suffix array} of $y$, that is the array of length $n$
of the starting positions of all sorted suffixes of $y$, i.e.~for all 
$1 \leq  r < n-1$, we have $y[\SA{}[r-1] \dd n-1] < y[\SA{}[r] \dd n - 1]$~\cite{SA}.
  Let \lcp{}$(r, s)$ denote the length of the longest common prefix of
the words $y[\SA{}[r] \dd n - 1]$ and $y[\SA{}[s] \dd n - 1]$, 
for all $0 \leq r,s < n-1$, and $0$ otherwise.
  We denote by \LCP{} the {\em longest common prefix} array of $y$ defined by 
\LCP{}$[r]=\lcp{}(r-1, r)$, for all $1 < r < n-1$, and 
\LCP{}$[0] = 0$. The inverse \iSA{} of the array \SA{} is defined by 
$\iSA{}[\SA{}[r]] = r$, for all $0 \leq r < n-1$.
  \SA{}~\cite{Nong:2009:LSA:1545013.1545570}, \iSA{}, and 
\LCP{}~\cite{indLCP} of $y$ can be computed in time and space $\cO(n)$.

In this article, we consider the following problem.

\defproblem{\MAW}{a word $y$ on $\Sigma$ of length $n$}{all tuples $<a,(i,j)>$, such that word $x$, defined by
$x[0]=a$, $a \in \Sigma$, and $x[1 \dd m-1] = y[i \dd j]$, $m \geq 2$, is a minimal absent word of $y$}
\section{Algorithm \textsf{MAW}}
\label{sec:algo}
  In this section, we present algorithm \textsf{MAW}, an $\cO(n)$-time and $\cO(n)$-space algorithm for finding all minimal absent words in a word of length $n$ using arrays \SA{} and \LCP{}.
  We first explain how we can characterise the minimal absent words; then we introduce how their computation can be done efficiently by using arrays \SA{} and \LCP{}.
  Finally, we present in detail the two main steps of the algorithm.

  A minimal absent word $x[0\dd m-1]$ of a word $y[0\dd n-1]$ is an absent word whose proper factors all occur in $y$.
  Among them, $x_1=x[1\dd m-1]$ and $x_2=x[1\dd m-2]$ occur in $y$; we will focus on these two factors to characterise the minimal absent words.
  To do so, we will consider each occurrence of $x_1$ and $x_2$, and construct the sets of letters that occur just before:
\begin{eqnarray*}
 \textsf{B}(x_1)=\{ y[j-1]: \mbox{ $j$ is the starting position of an occurrence of $x_1$}\}\\
 \textsf{B}(x_2)=\{ y[j-1]: \mbox{ $j$ is the starting position of an occurrence of $x_2$}\}
\end{eqnarray*}

\begin{lemma}
\label{lem:f1}
  Let $x$ and $y$ be two words. Then $x$ is a minimal absent word of $y$ if and only if $x[0]$ is an element of $\textsf{B}(x_2)$ and not of $\textsf{B}(x_1)$, with $x_1=x[1\dd m-1]$ and $x_2=x[1\dd m-2]$.
\end{lemma}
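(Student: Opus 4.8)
The plan is to reduce the biconditional to the standard three-part characterisation of minimal absent words and then translate each part into a membership condition on the sets $\textsf{B}(x_1)$ and $\textsf{B}(x_2)$. Concretely, I would first establish the auxiliary fact that a word $x$ of length $m\geq 2$ is a minimal absent word of $y$ if and only if (i) $x$ does not occur in $y$, (ii) its longest proper prefix $x[0\dd m-2]$ occurs in $y$, and (iii) its longest proper suffix $x_1=x[1\dd m-1]$ occurs in $y$. The only nontrivial part here is that (i)--(iii) suffice: every proper factor of $x$ is a factor either of $x[0\dd m-2]$ or of $x[1\dd m-1]$, and since the property of occurring in $y$ is closed under taking factors, (ii) and (iii) force every proper factor to occur, which together with (i) is exactly minimality.

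Next I would translate conditions (i) and (ii) into the language of the $\textsf{B}$ sets. For (ii), observe that $x[0\dd m-2]=x[0]\cdot x_2$, so an occurrence of $x[0\dd m-2]$ at position $i$ is precisely an occurrence of $x_2$ at position $i+1$ immediately preceded by the letter $x[0]$; hence (ii) holds if and only if $x[0]\in\textsf{B}(x_2)$. For (i), similarly $x=x[0]\cdot x_1$, so an occurrence of $x$ is exactly an occurrence of $x_1$ preceded by $x[0]$; thus $x$ occurs in $y$ if and only if $x[0]\in\textsf{B}(x_1)$, and (i) becomes $x[0]\notin\textsf{B}(x_1)$. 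These two equivalences already yield the forward implication: if $x$ is minimal absent then (i) and (ii) hold, so $x[0]\in\textsf{B}(x_2)$ and $x[0]\notin\textsf{B}(x_1)$.

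For the converse I would assume $x[0]\in\textsf{B}(x_2)$ and $x[0]\notin\textsf{B}(x_1)$, and the two equivalences above immediately recover (i) and (ii). The one part not carried by the $\textsf{B}$-conditions is (iii), the occurrence of the suffix $x_1$: I expect this to be the \emph{main subtlety}, since the membership conditions alone constrain only the prefix and the absence of $x$. This gap is closed by the standing assumption of this section that $x_1$ (and hence its prefix $x_2$) occurs in $y$; granting (iii), all of (i)--(iii) hold and the auxiliary characterisation gives that $x$ is a minimal absent word, completing the equivalence. Finally I would verify the boundary case $m=2$ separately, where $x_2=\varepsilon$ and $\textsf{B}(\varepsilon)$ is the set of letters occurring in $y$, to confirm that the statement degenerates correctly.
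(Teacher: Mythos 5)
Your proof is correct and takes essentially the same route as the paper: both rest on the observation that every proper factor of $x$ lies inside the longest proper prefix $x[0]x_2$ or the longest proper suffix $x_1$, combined with the translation of ``$x[0]x_2$ occurs'' (resp.\ ``$x$ occurs'') into $x[0]\in\textsf{B}(x_2)$ (resp.\ $x[0]\in\textsf{B}(x_1)$). You are also right about the one genuine subtlety---the converse fails if $x_1$ is absent from $y$, since then $\textsf{B}(x_1)=\emptyset$ vacuously---and the paper closes this gap exactly as you do, not by a section-wide convention but by opening that direction of its proof with the explicit hypothesis ``Let $x_1$ be a factor of $y$.''
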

\begin{proof}
    ($\Rightarrow$) Let $x_1$ be a factor of $y$, $x_2$ be the longest proper prefix of $x_1$, and $\textsf{B}(x_1)$ and $\textsf{B}(x_2)$ the sets defined above.
          Further let $p$ be a letter that is in $\textsf{B}(x_2)$ but not in $\textsf{B}(x_1)$. Then, there exists a starting position $j$ of an occurrence of $x_2$
      such that $y[j-1]=p$, so the word $px_2$ occurs at position $j-1$ in $y$. 
          $p$ is not in $\textsf{B}(x_1)$ so $px_1$ does not occur in $x$ and is therefore an absent word of $y$.
          $x_1$ and $px_2$ are factors of $y$, so all the proper factors of $px_1$ occur in $y$, thus $px_1$ is a minimal absent word of $y$.
          
    ($\Leftarrow$) Let $x[0\dd m-1]$ be a minimal absent word of $y$. Its longest proper prefix $x[0\dd m-2]=x[0]x_2$ occurs in $y$, so
      $x[0]$ is in $\textsf{B}(x_2)$. Its longest proper suffix, $x_1$ occurs as well in $y$, but $x=x[0]x_1$ is an absent word of $y$ so it does not occur in $y$ and 
      $x[0]$ is not in $\textsf{B}(x_1)$.\qed
\end{proof}

\begin{lemma} 
\label{lem:f2}
  Let $x$ be a minimal absent word of length $m$ of word $y$ of length $n$. Then there exists an integer $i \in [0:n-1]$ 
such that \textnormal{$y[\SA{}[i]\dd \SA{}[i]+\LCP{}[i]]=x_1$} or \textnormal{$y[\SA{}[i]\dd \SA{}[i]+\LCP{}[i+1]]=x_1$}, where $x_1=x[1\dd m-1]$.
\end{lemma}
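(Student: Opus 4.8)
The plan is to locate, within \SA{}, the contiguous block of positions whose suffixes begin with $x_1=x[1\dd m-1]$, and then to extract the index $i$ from a boundary of this block using the \LCP{} array. The starting observation is that, since $x$ is a minimal absent word, $x_1$ is one of its proper factors and hence occurs in $y$, so some suffix of $y$ is prefixed by $x_1$. Write $c=x[m-1]$ and $x_2=x[1\dd m-2]$, so that $x_1=x_2c$. The suffixes prefixed by $x_1$ form a contiguous interval $[p',q']$ of \SA{}, nested inside the interval $[p,q]$ of suffixes prefixed by $x_2$ (when $m=2$ we have $x_2=\varepsilon$ and $[p,q]$ is the whole array). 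Every position of $[p,q]$ corresponds to a suffix starting with $x_2$, a fact I would use repeatedly below.

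The crux of the argument, and the only place where absence of $x$ enters, is to show that this nesting is \emph{strict}: some suffix begins with $x_2$ but not with $x_1$. For this I would use that the longest proper prefix $x[0]x_2$ of $x$ occurs in $y$ (this is exactly the observation made in the proof of Lemma~\ref{lem:f1}); fix a starting position $j$ of $x_2$ with $y[j-1]=x[0]$. If this occurrence of $x_2$ were followed by $c$, then $x=x[0]x_2c$ would occur at position $j-1$, contradicting that $x$ is absent. Hence either $y[j+m-2]\neq c$, so the suffix at position $j$ lies in $[p,q]$ but outside $[p',q']$, or this occurrence of $x_2$ is a suffix of $y$, so the suffix equal to $x_2$ lies in $[p,q]$ and is too short to start with $x_1$. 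In both cases $[p',q']$ is a proper subinterval of $[p,q]$. I expect this boundary case, where $x_2$ occurs in $y$ only as a suffix, to be the main point requiring care.

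It then remains to turn strictness into the two stated forms. Strictness gives $p'>p$ or $q'<q$. If $p'>p$, then $\SA{}[p'-1]$ begins with $x_2$ but not with $x_1$, so it agrees with $\SA{}[p']$ on exactly the first $m-2$ letters and differs next; therefore $\LCP{}[p']=m-2$, and with the inclusive indexing $y[\SA{}[p']\dd\SA{}[p']+\LCP{}[p']]=y[\SA{}[p']\dd\SA{}[p']+m-2]$ is the length-$(m-1)$ prefix of suffix $\SA{}[p']$, namely $x_1$; so $i=p'$ satisfies the first alternative. If instead $q'<q$, the symmetric argument applied to $\SA{}[q'+1]$ gives $\LCP{}[q'+1]=m-2$ and $y[\SA{}[q']\dd\SA{}[q']+\LCP{}[q'+1]]=x_1$, so $i=q'$ satisfies the second alternative. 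The convenient point is that inclusive range notation converts the \LCP{} value $m-2$ into a factor of length exactly $m-1=|x_1|$, matching the statement precisely.
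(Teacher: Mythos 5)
Your argument is correct in substance but takes a genuinely different route from the paper's. The paper never forms the suffix-array interval of $x_1$: it fixes one occurrence $j$ of $x[0\dd m-2]$ and one occurrence $k$ of $x_1$, observes that the absence of $x$ forces $\lcp{}(\iSA{}[j+1],\iSA{}[k])=m-2$ exactly, and then, via the range-minimum property of \LCP{} over the \SA{} range between the two positions, selects the extremal index with \LCP{} value $m-2$ nearest to $\iSA{}[k]$ (a maximum in one case, a minimum in the other, the case split being on the relative order of $\iSA{}[j+1]$ and $\iSA{}[k]$); extremality guarantees the suffix at that index still begins with $x_1$. Your interval formulation replaces this extremal scan by strict nesting of the $x_1$-interval $[p',q']$ inside the $x_2$-interval $[p,q]$, and reads the witness off a boundary. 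The two proofs in fact produce the same index: in the paper's first case its maximal $i$ satisfies $\lcp(i,s_k)>m-2$ and $\LCP{}[i]=m-2$, which pins $i$ down as exactly your $p'$, and symmetrically its minimal $i$ is your $q'$. What your version buys is canonicity and economy: the witness does not depend on the choice of occurrences, the two alternatives in the statement are revealed as the two boundaries of the $x_1$-interval, and you avoid \iSA{} and the range-minimum argument entirely; the paper's formulation is closer in spirit to the two passes of the algorithm.

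One caveat, which you partly anticipated: your boundary case is airtight for $m\geq 3$, but for $m=2$ the phrase ``the suffix equal to $x_2$'' denotes the empty suffix, which is not an entry of \SA{} as defined in this paper. The free choice of $j$ rescues every case except when $x[0]$ occurs only at position $n-1$; there, for $n\geq 2$, strictness still holds but needs a different witness (if every suffix began with $x[1]$ then $y=x[1]^n$, forcing $x[0]=x[1]$ and making $x$ occur, a contradiction), and for $n=1$ strictness genuinely fails ($[p',q']=[p,q]$) even though the lemma survives via the convention $\LCP{}[0]=0$. This does not put you below the paper's own standard of rigor---its proof would likewise evaluate $\iSA{}[j+1]$ with $j+1=n$ in this corner, implicitly relying on its out-of-range conventions---but it is the one spot you should patch in a careful write-up.
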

\begin{proof}
  Let $j$ be the starting position of an occurrence of $x[0\dd m-2]$ in $y$ and $k$ the starting position of an occurrence of $x_1$ in $y$.
  The suffixes $y[j+1\dd n-1]$ and $y[k\dd n-1]$ share $x_2=x[1\dd m-2]$ as a common prefix.
  As $x$ is an absent word of $y$, this common prefix cannot be extended so $x_2$ is the longest common prefix of those suffixes.
  By using \iSA{}, the inverse suffix array, we have $\lcp{}(\iSA{}[j+1],\iSA{}[k])=m-2$. Let us also note $s_k=\iSA{}[k]$
  and $s_{j+1}=\iSA{}[j+1]$. We then have two possibilities:
\begin{itemize}
  \item if $s_k > s_{j+1}$: \quad for all $s$ in $[s_{j+1}+1:s_k]$, we have
    $\LCP{}[s]\geq m-2$, with equality holding for at least one position.
        Let us define $i=\max\{s \in [s_{j+1}:s_k]:$ $\LCP{}[s]=m-2$ \},
    the maximality of $i$ implies that $i=s_k$ or $\lcp{}(i,s_k)>m-2$ and thus, in both cases
    $y[\SA{}[i]\dd \SA{}[i]+\LCP{}[i]]=x_1$.
  \item if $s_{j+1} > s_k$ : \quad for all $s$ in $[s_k+1:s_{j+1}]$, we have
    $\LCP{}[s]\geq m-2$, with equality holding for at least one position.
        Let us define $i=\min\{s \in [s_k:s_{j+1}]:$ $\LCP{}[s+1]=m-2$ \}, 
        the minimality of $i$ implies $i=s_k$ or $\lcp{}(s_k,i)>m-2$ and thus, in both cases
    $y[\SA{}[i]\dd \SA{}[i]+\LCP{}[i+1]]=x_1$.
\end{itemize}
For an illustration inspect Fig.~\ref{fig:f2}. \qed
\end{proof}
\begin{figure}
\begin{minipage}[c]{0.45\linewidth}
\begin{tikzpicture}
	\draw[->] (0,0) -- (xyz cs:x=3.5);
	\draw[->] (-0.1,3) -- (-0.1,-0.2);

	\foreach \x/\y in {0.05/2,0.25/1.5,0.45/1,0.65/2,0.85/3,1.05/3.5,1.45/2,
	  1.85/2.5,2.05/2,2.25/1,2.65/.5} {
	  \fill[blue!30] (0.05,\x) rectangle +(\y,0.15);
	}
	\foreach \x/\y in {1.25/3,1.65/1,2.45/1.5} {
	  \fill[blue] (0.05,\x) rectangle +(\y,0.15);
	}

	\path[draw=red, line width=1pt] (1.05,0) -- (1.05,2.8);
				
	\node (sk) at (-0.9,1.325) {$s_k$};
	\draw[->] (sk) -- (-0.15,1.325);
	
	\node (si) at (-0.9,1.725) {$i$};
	\draw[->] (si) -- (-0.15,1.725);
	
	\node (sj) at (-0.9,2.525) {$s_{j+1}$};
	\draw[->] (sj) -- (-0.15,2.525);
	
	\node[red] (sk) at (1.05,-0.15) {$m-2$};
	\node (txt) at (1.9,3.7) {{\footnotesize $y[\SA{}[i]\dd \SA{}[i]+\LCP{}[i]]=x_1$}};
	
	\node (txt) at (-0.4,3.2) {\iSA{}};
	\node (txt) at (3.9,0) {\LCP{}};
\end{tikzpicture}

\end{minipage}
\begin{minipage}[c]{0.45\linewidth}
\begin{tikzpicture}
	\draw[->] (0,0) -- (xyz cs:x=3.5);
	\draw[->] (-0.1,3) -- (-0.1,-0.2);
	
	\foreach \x/\y in {0.25/1.5,0.45/1,0.85/3,1.05/3.5,1.45/2,
	  1.65/1,1.85/2.5,2.05/2,2.25/1,2.45/1.5,2.65/.5} {
	  \fill[blue!30] (0.05,\x) rectangle +(\y,0.15);
	}
	\foreach \x/\y in {0.05/2,0.65/2,1.25/3} {
	  \fill[blue] (0.05,\x) rectangle +(\y,0.15);
	}

	\path[draw=red, line width=1pt] (1.05,0) -- (1.05,3);

	\node (sk) at (-0.8,1.325) {$s_k$};
	\draw[->] (sk) -- (-0.15,1.325);
	
	\node (si) at (-0.8,0.725) {$i$};
	\draw[->] (si) -- (-0.15,0.725);
	
	\node (sj) at (-0.8,0.125) {$s_{j+1}$};
	\draw[->] (sj) -- (-0.15,0.125);
	\node[red] at (1.05,-0.15) {$m-2$};
	\node (txt) at (1.9,3.7) {{\footnotesize $y[\SA{}[i]\dd \SA{}[i]+\LCP{}[i+1]]=x_1$}};
	
	\node (txt) at (-0.4,3.2) {\iSA{}};
	
	\node (txt) at (3.9,0) {\LCP{}};
	
\end{tikzpicture}

\end{minipage}
\caption{Illustration of Lemma~\ref{lem:f2}}
\label{fig:f2}
\end{figure}
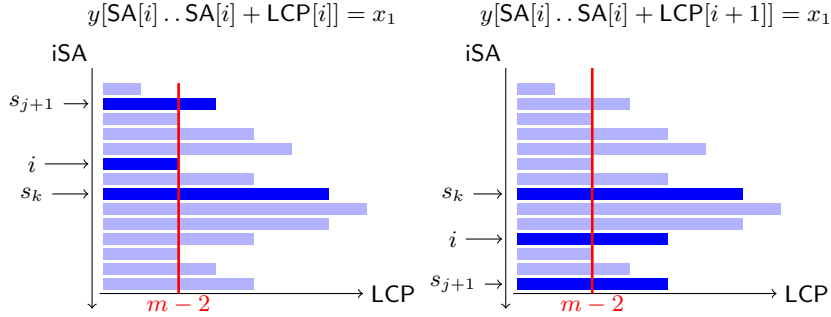
  By Lemma~\ref{lem:f2}, we can compute all minimal absent words of $y$ by examining only the factors $S_{2i}=y[\SA{}[i]\dd \SA{}[i]+\LCP{}[i]]$ and $S_{2i+1}=y[\SA{}[i]\dd \SA{}[i]+\LCP{}[i+1]]$, for all $i$ in $[0:n-1]$. 
  We just need to construct the sets $\textsf{B}_{1}(S_{2i})$, $\textsf{B}_{2}(S_{2i})$ and $\textsf{B}_{1}(S_{2i+1})$, $\textsf{B}_{2}(S_{2i+1})$, where $\textsf{B}_{1}(S_{j})$ (resp.~$\textsf{B}_{2}(S_{j})$) is the set of letters that immediately precede an occurrence of the factor $S_j$ (resp.~the longest proper prefix of $S_j$), for all $j$ in $[0:2n-1]$.
  Then, by Lemma~\ref{lem:f1}, the symmetric difference between $\textsf{B}_{S_{j},1}$ and $\textsf{B}_{S_{j},2}$, for all $j$ in $[0:2n-1]$, gives us all the minimal absent words of $y$.

  Thus the important computational step is to compute these sets of letters efficiently. 
  To do so, we visit twice arrays \SA{} and \LCP{} using another array denoted by \Before{} (resp.~\BeforeLCP{}) to store set $\textsf{B}_{1}(S_j)$ (resp.~$\textsf{B}_{2}(S_j)$), for all $j$ in $[0:2n-1]$. 
  Both arrays \Before{} and \BeforeLCP{} consist of $2n$ elements, where each element is a bit vector of length $\sigma$, the size of the alphabet, corresponding to one bit per alphabet letter.
  While iterating over arrays \SA{} and \LCP{}, we maintain another array denoted by \Interval{}, such that, at the end of each iteration $i$, the $\ell^{th}$ element of \Interval{} stores the set of letters we have encountered before the prefix of length $\ell$ of $y[\SA{}[i]\dd n-1]$. 
  Array \Interval{} consists of ${\displaystyle\max_{ i \in [0:n-1] }}\LCP{}[i]+1$ elements, where each element is a bit vector of length $\sigma$.

  During the first pass, we visit arrays \SA{} and \LCP{} from top to bottom.
  For each $i$ $\in$ $[0:n-1]$, we store in positions $2i$ and $2i+1$ of \Before{} (resp.~\BeforeLCP{}) the set of letters that immediately precede occurrences of $S_{2i}$ and $S_{2i+1}$ (resp.~their longest proper prefixes) whose starting positions appear before position $i$ in \SA{}.
  During the second pass, we go bottom up to complete the sets, which are already stored, with the letters preceding the occurrences whose starting positions appear after position $i$ in \SA{}.
  In order to be efficient, we will maintain a stack structure, denoted by \Lifolcp{}, to store the \LCP{} values of the factors that are prefixes of the one we are currently visiting.
  \begin{myfunction}[H]{11 cm}
  \Algo{\textsf{Top-Down-Pass} (\textnormal{$y$, $n$, \SA{}, \LCP{},  \Before{}, \BeforeLCP{}, $\sigma$})}{
	\Interval{}[$0 \dd \smash{\displaystyle\max_{ i \in [0:n-1]}}$\LCP{}[$i$]][$0 \dd \sigma-1$] $\leftarrow 0$\;
	\Lifolcp{}.\textsf{push}(0)\;
	\ForEach{\textnormal{$i \in [0:n-1]$}}{
		\If{\textnormal{$i>0$ \textbf{and} $\LCP{}[i] < \LCP{}[i-1]$}}
		{
			\While{\textnormal{$\Lifolcp{}.\textsf{top}()>\LCP{}[i]$}}
			{
				$\textsf{proxa} \leftarrow \Lifolcp{}.\textsf{pop}()$\;
				$\Interval[\textsf{proxa}][0 \dd \sigma - 1]\leftarrow 0$\;
			}
			\If{\textnormal{$\Lifolcp{}.\textsf{top}() < \LCP{}[i]$}}{$\Interval{}[\LCP{}[i]]\leftarrow \Interval{}[\textsf{proxa}]$\;}
			$\Before{}[2i-1]\leftarrow \Interval{}[\textsf{proxa}]$; $\BeforeLCP{}[2i-1]\leftarrow \Interval{}[\LCP{}[i]]$\;
		} 
	\If{\textnormal{ $\SA{}[i]>0$ }}{
		$u \leftarrow y[\SA{}[i]-1]$; $\textsf{value} \leftarrow \Lifolcp{}.\textsf{top}()$\;
		\While{\textnormal{$\Interval{}[\textsf{value}][u]=0$}}
		 {
		      $\Interval{}[\textsf{value}][u] \leftarrow 1$; $\textsf{value} \leftarrow \Lifolcp{}.\textsf{next}()$\;
		 }
		$\Interval{}[\LCP{}[i]][u] \leftarrow 1$\;
		$\Before{}[2i][u] \leftarrow 1$; $\Before{}[2i+1][u] \leftarrow 1$\;
		$\BeforeLCP{}[2i][u] \leftarrow 1$; $\BeforeLCP{}[2i+1][u] \leftarrow 1$\;
	}
	\If{\textnormal{$i>0$ \textbf{and} $\LCP{}[i]>0$ \textbf{ and } $\SA{}[i-1]>0$}}{
		$v \leftarrow y[\SA{}[i-1]-1]$\;
		$\Interval{}[\LCP{}[i]][v] \leftarrow 1$ \;
	}
	$\BeforeLCP{}[2i] \leftarrow \Interval{}[\LCP{}[i]]$\;
	\lIf{\textnormal{$\Lifolcp{}.\textsf{top}() \neq \LCP{}[i]$}}{$\Lifolcp{}.\textsf{push}(\LCP{}[i])$}
	}
	}
  \end{myfunction}
\subsection{Top-down Pass}

  Each iteration of the top-down pass consists of two steps. 
  In the first step, we visit \Lifolcp{} from the top and for each \LCP{} value read we set to zero the corresponding element of \Interval{}; then we remove this value from the stack. 
  We stop when we reach a value smaller or equal to $\LCP{}[i]$.
  We do this as the corresponding factors are not prefixes of $y[\SA{}[i]\dd n-1]$, nor will they be prefixes in the remaining suffixes.
  We push at most one value onto the stack \Lifolcp{} per iteration, so, in total, there are $n$ times we will set an element of \Interval{} to zero.
  This step requires time and space $\cO(n\sigma)$.
  
  For the second step, we update the elements that correspond to factors in the suffix array with an \LCP{} value less than $\LCP{}[i]$.
  To do so, we visit the stack \Lifolcp{} top-down and, for each \LCP{} value read, we add the letter 
  $y[\SA{}[i]-1]$ to the corresponding element of \Interval{} until we reach a value whose element already contains it.
  This ensures that, for each value read, the corresponding element of \Interval{} has no more than $\sigma$ letters added.
  As we consider at most $n$ values, this step requires time and space $\cO(n\sigma)$. For an example, see Table~\ref{tab:td}.
\begin{table}[!t]
\begin{center}
\subfloat[{\footnotesize}]{
\scalebox{0.75}{
\begin{tabular}{|*{3}{c|}}\hline
 $j$ & \Before{} & \BeforeLCP{} \\ \hline
 0 & 00 & 00 \\ \hline
 1 & 00 & 00 \\ \hline
 2 & 10 & 10 \\ \hline
 3 & 10 & 10 \\ \hline
 4 & 01 & 11 \\ \hline
 5 & 11 & 11 \\ \hline
 6 & 01 & 11 \\ \hline
 7 & 11 & 11 \\ \hline
 8 & 01 & 11 \\ \hline
 9 & 01 & 01 \\ \hline
 10 & 10 & 11 \\ \hline
 11 & 10 & 10 \\ \hline
 12 & 10 & 10 \\ \hline
 13 & 10 & 11 \\ \hline
 14 & 10 & 11 \\ \hline
\end{tabular}
}
}
\qquad
\subfloat[{\footnotesize}]{
\scalebox{0.75}{
\begin{tabular}{|*{3}{c|}*{4}{c}*{6}{c|}}\hline
 $i$&\LCP{}&\SA{}&\multicolumn{5}{c|}{Factor}&\Interval{}[0]&\Interval{}[1]&\Interval{}[2]&\Interval{}[3]&\Interval{}[4]\\
\hline 0&0&0&\cellcolor{Thistle}\texttt{A}& & & & &00&00&00&00&00\\ \hline
 & & &\cellcolor{Apricot}\texttt{A}&\cellcolor{Thistle}\texttt{A}& & & & & & & & \\\hline
1&1&1&\cellcolor{Apricot}\texttt{A}&\cellcolor{Thistle}\texttt{B}& & & &10&10&00&00&00\\ \hline
 & & &\cellcolor{Apricot}\texttt{A}&\cellcolor{Apricot}\texttt{B}&\cellcolor{Apricot}\texttt{A}&\cellcolor{Apricot}\texttt{B}&\cellcolor{Thistle}\texttt{A}& & & & & \\ \hline
2&4&3&\cellcolor{Apricot}\texttt{A}&\cellcolor{Apricot}\texttt{B}&\cellcolor{Apricot}\texttt{A}&\cellcolor{Apricot}\texttt{B}&\cellcolor{Thistle}\texttt{B}&11&11&00&00&11\\ \hline
 & & &\cellcolor{Apricot}\texttt{A}&\cellcolor{Apricot}\texttt{B}&\cellcolor{Thistle}\texttt{A}& & & & & & & \\ \hline
3&2&5&\cellcolor{Apricot}\texttt{A}&\cellcolor{Apricot}\texttt{B}&\cellcolor{Thistle}\texttt{B}& & &11&11&11&00&00\\ \hline
 & & &\cellcolor{Thistle}\texttt{A}& & & & & & & & & \\ \hline
4&0&7&\cellcolor{Thistle}\texttt{B}& & & & &11&00&00&00&00\\ \hline
 & & &\cellcolor{Apricot}\texttt{B}&\cellcolor{Thistle}
& & & & & & & &\\ \hline
5&1&2&\cellcolor{Apricot}\texttt{B}&\cellcolor{Thistle}\texttt{A}& & & &11&11&00&00&00\\ \hline
 & & &\cellcolor{Apricot}\texttt{B}&\cellcolor{Apricot}\texttt{A}&\cellcolor{Apricot}\texttt{B}&\cellcolor{Thistle}\texttt{A}& & & & & &\\ \hline
6&3&4&\cellcolor{Apricot}\texttt{B}&\cellcolor{Apricot}\texttt{A}&\cellcolor{Apricot}\texttt{B}&\cellcolor{Thistle}\texttt{B}& &11&11&00&10&00\\ \hline
 & & &\cellcolor{Apricot}\texttt{B}&\cellcolor{Thistle}\texttt{A}& & & & & & & &\\ \hline
7&1&6&\cellcolor{Apricot}\texttt{B}&\cellcolor{Thistle}\texttt{B}& & & &11&11&00&00&00\\ \hline
\end{tabular}
}
}
\end{center}
  \caption{(a) Arrays \Before{} and \BeforeLCP{} obtained after the top-down pass for word $y=\texttt{AABABABB}$; (b) Elements of array \Interval{}
at the end of each iteration of the top-down pass. Factors $S_{j}$ are in orange and violet; their longest proper prefixes are in orange only.}
\label{tab:td}
\end{table}
\subsection{Bottom-up Pass}
  Intuitively, the idea behind the bottom-up pass is the same as in the top-down pass except that in this instance, as we start from the bottom, the suffix $y[\SA{}[i]\dd n-1]$ can share more than its prefix of length $\LCP{}[i]$ with the previous suffixes in \SA{}.
  Therefore we may need the elements of \Interval{} that correspond to factors with an \LCP{} value greater than $\LCP{}[i]$ to correctly compute the arrays \Before{} and \BeforeLCP{}.
  To achieve this, we maintain another stack \Liforem{} to copy the values from \Lifolcp{} that are greater than $\LCP{}[i]$.
  This extra stack allows us to keep in \Lifolcp{} only values that are smaller or equal to $\LCP[i]$ without losing the additional information we require to correctly compute \Before{} and \BeforeLCP{}.
  At the end of the iteration, we will set to zero each element corresponding to a value in \Liforem{} and empty the stack.
  Thus to set an element of \Interval{} to zero requires two operations more than in the first pass.
  As we consider at most $n$ values, this step requires time and space $\cO(n\sigma)$.

  \begin{myfunction}[H]{11cm}
\Algo{\textsf{Bottom-Up-Pass}($n$, \textnormal{\SA{}, \LCP{},  \Before{}, \BeforeLCP{}, $\Sigma$, $\sigma$})}
{ 
	\Interval{}[$0 \dd \smash{\displaystyle\max_{ i \in [0:n-1]}}$\LCP{}[$i$]][$0 \dd \sigma-1$] $\leftarrow 0$\;
	\Lifolcp{}.\textsf{push}(0)\;
	\ForEach{\textnormal{$i \in [n-1:0]$}}
	{
		\textsf{proxa} $\leftarrow \LCP{}[i]+1$; \textsf{proxb} $\leftarrow 1$\;
		\If{ \textnormal{$i<n-1$ \textbf{and} $\LCP{}[i] < \LCP{}[i+1]$}}
		{
			\While{\textnormal{$\Lifolcp{}.\textsf{top}() > \LCP{}[i]$}}
			{
				$\textsf{proxa} \leftarrow \Lifolcp{}.\textsf{pop}()$\;
				$\Liforem{}.\textsf{push}(\textsf{proxa})$\;
			}
			\If{\textnormal{$\Lifolcp{}.\textsf{top}()<\LCP{}[i]$}}{$\Interval{}[\LCP{}[i]]\leftarrow \Interval{}[\textsf{proxa}]$}
		}
		\ForEach{\textnormal{$k \in \Sigma: \Before{}[2i][k]=1$}}
		{
			$\textsf{value} \leftarrow \Lifolcp{}.\textsf{top}()$\;
			\While{\textnormal{$\Interval{}[\textsf{value}][k]=0$}}
			{
				$\Interval{}[\textsf{value}][k] \leftarrow 1$; $\textsf{value} \leftarrow \Lifolcp{}.\textsf{next}()$;
			}
			$\Interval{}[\LCP{}[i]][k] \leftarrow 1$\;
		}
		\begin{tabbing}
		\=$\BeforeLCP{}[2i+1]$\= $\leftarrow \BeforeLCP[2i+1]$ \= $\textbf{or}~\Interval[\LCP[i+1]]$\kill
		\> $\BeforeLCP{}[2i]$\> $\leftarrow \BeforeLCP{}[2i]$\> $\textbf{bit-or}~\Interval{}[\LCP{}[i]]$;\\
		\> $\BeforeLCP{}[2i+1]$\> $\leftarrow \BeforeLCP{}[2i+1]$\> $\textbf{bit-or}~\Interval{}[\LCP{}[i+1]]$;\\
		\> $\Before{}[2i+1]$\> $\leftarrow \Before{}[2i+1]$\> $\textbf{bit-or}~\Interval{}[\textsf{proxb}]$;
		\end{tabbing}
		\textsf{proxb} $\leftarrow$ \textsf{proxa}\;
		$\Before{}[2i] \leftarrow \Before{}[2i] \textbf{ bit-or } \Interval{}[\textsf{proxa}]$\;
		\While{\textnormal{\Liforem{} not empty}}
		{
			$\textsf{value}\leftarrow \Liforem{}.\textsf{pop}()$; $\Interval{}[\textsf{value}][0 \dd \sigma - 1]\leftarrow 0$;
		}
	\lIf{\textnormal{$\Lifolcp{}.\textsf{top}() \neq \LCP{}[i]$}}{$\Lifolcp{}.\textsf{push}(\LCP{}[i])$}
	}
}
\end{myfunction}

  Another difference between the top-down and bottom-up passes is that in order to retain the information computed in the first pass, the second step is performed for each letter in $\Before{}[2i]$. 
  As, for each \LCP{} value read, we still add a letter only if is not already contained in the corresponding element of \Interval{}, no more than $\sigma$ letters are added.
  Thus this step requires time and space $\cO(n\sigma)$. For an example, see Table~\ref{tab:bu}.
  
  \begin{table}
\begin{center}
\subfloat[{\footnotesize}]{
\scalebox{0.75}{
\begin{tabular}{|*{3}{>{\small}c|}}\hline
 $j$ & \Before{} & \BeforeLCP{} \\ \hline
 0 & 11 & 11 \\ \hline
 1 & 00 & 11 \\ \hline
 2 & 11 & 11 \\ \hline
 3 & 10 & 11 \\ \hline
 4 & 01 & 11 \\ \hline
 5 & 11 & 11 \\ \hline
 6 & 01 & 11 \\ \hline
 7 & 11 & 11 \\ \hline
 8 & 11 & 11 \\ \hline
 9 & 01 & 11 \\ \hline
 10 & 10 & 11 \\ \hline
 11 & 10 & 10 \\ \hline
 12 & 10 & 10 \\ \hline
 13 & 10 & 11 \\ \hline
 14 & 10 & 11 \\ \hline
\end{tabular}
}
}
\qquad
\subfloat[{\footnotesize}]{
\scalebox{0.75}{
\begin{tabular}{|*{3}{c|}*{4}{c}*{6}{c|}}\hline
 $i$&\LCP{}&\SA{}&\multicolumn{5}{c|}{Factor}&\Interval{}[0]&\Interval{}[1]&\Interval{}[2]&\Interval{}[3]&\Interval{}[4]\\ \hline 
7&1&6&\cellcolor{Apricot}\texttt{B}&\cellcolor{Thistle}\texttt{B}& & & &10&10&00&00&00\\ \hline
 & & &\cellcolor{Apricot}\texttt{B}&\cellcolor{Thistle}\texttt{A}& & & & & & & & \\\hline
6&3&4&\cellcolor{Apricot}\texttt{B}&\cellcolor{Apricot}\texttt{A}&\cellcolor{Apricot}\texttt{B}&\cellcolor{Thistle}\texttt{B}& &10&10&00&10&00\\ \hline
 & & &\cellcolor{Apricot}\texttt{B}&\cellcolor{Apricot}\texttt{A}&\cellcolor{Apricot}\texttt{B}&\cellcolor{Thistle}\texttt{A}& & & & & & \\ \hline
5&1&2&\cellcolor{Apricot}\texttt{B}&\cellcolor{Thistle}\texttt{A}& & & &10&10&00&00&00\\ \hline
 & & &\cellcolor{Apricot}\texttt{B}&\cellcolor{Thistle}
& & & & & & & & \\ \hline
4&0&7&\cellcolor{Thistle}\texttt{B}& & & & &11&00&00&00&00\\ \hline
 & & &\cellcolor{Thistle}\texttt{A}& & & & & & & & & \\ \hline
3&2&5&\cellcolor{Apricot}\texttt{A}&\cellcolor{Apricot}\texttt{B}&\cellcolor{Thistle}\texttt{B}& & &11&00&01&00&00\\ \hline
 & & &\cellcolor{Apricot}\texttt{A}&\cellcolor{Apricot}\texttt{B}&\cellcolor{Thistle}\texttt{A}& & & & & & & \\ \hline
2&4&3&\cellcolor{Apricot}\texttt{A}&\cellcolor{Apricot}\texttt{B}&\cellcolor{Apricot}\texttt{A}&\cellcolor{Apricot}\texttt{B}&\cellcolor{Thistle}\texttt{B}&11&00&01&00&01\\ \hline
 & & &\cellcolor{Apricot}\texttt{A}&\cellcolor{Apricot}\texttt{B}&\cellcolor{Apricot}\texttt{A}&\cellcolor{Apricot}\texttt{B}&\cellcolor{Thistle}\texttt{A}& & & & &\\ \hline
1&1&1&\cellcolor{Apricot}\texttt{A}&\cellcolor{Thistle}\texttt{B}& & & &11&11&00&00&00\\ \hline
 & & &\cellcolor{Apricot}\texttt{A}&\cellcolor{Thistle}\texttt{A}& & & & & & & &\\ \hline
0&0&0&\cellcolor{Thistle}\texttt{A}& & & & &11&00&00&00&00\\ \hline
\end{tabular}
}
}
  \end{center}
  \caption{(a) Arrays \Before{} and \BeforeLCP{} obtained after the bottom-up pass for word $y=\texttt{AABABABB}$; (b) Elements of array \Interval{} at the end of each iteration of the bottom-up pass. Factors $S_{j}$ are in orange and violet; their longest proper prefixes are in orange only.}
\label{tab:bu}
\end{table}

  Once we have computed arrays \Before{} and \BeforeLCP{}, we need to compare each element.
  If there is a symmetric difference, by Lemma~\ref{lem:f1}, we can construct a minimal absent word. For an example, see Table~\ref{tab:maws}. 
  To ensure that we can report the minimal absent words in linear time, we must be able to report each one in constant time. 
  To achieve this, we can represent them as a tuple $<a,(i,j)>$, where for some word $x$ of length $m \geq 2$ that is a minimal absent word of $y$, 
  the following holds: $x[0]=a$ and $x[1 \dd m-1] = y[i \dd j]$. Lemma~\ref{lem:f2} ensures us to be exhaustive. Therefore we obtain the following result.
\begin{theorem}
Algorithm \textsf{MAW} solves problem $\MAW$ in time and space $\cO(n)$.
\label{the:main}
\end{theorem}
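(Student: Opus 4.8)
The plan is to decompose the statement into a \emph{correctness} claim and a \emph{complexity} claim and to treat each separately, leaning on Lemmas~\ref{lem:f1} and~\ref{lem:f2} for the former and on an amortised analysis of the two passes for the latter. The guiding idea is that once we know (i) which factors of \SA{}/\LCP{} can possibly carry a minimal absent word and (ii) that the two passes fill \Before{} and \BeforeLCP{} with the correct sets of preceding letters, the theorem follows by combining Lemma~\ref{lem:f2} (exhaustiveness) with Lemma~\ref{lem:f1} (the characterisation) and an $\cO(1)$-per-output reporting step.

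For correctness I would first invoke Lemma~\ref{lem:f2} to guarantee exhaustiveness: every minimal absent word $x$ of $y$ has its suffix $x_1=x[1\dd m-1]$ equal to one of the factors $S_{2i}=y[\SA{}[i]\dd \SA{}[i]+\LCP{}[i]]$ or $S_{2i+1}=y[\SA{}[i]\dd \SA{}[i]+\LCP{}[i+1]]$ for some $i\in[0:n-1]$, so it suffices to range over the $2n$ factors $S_j$, $j\in[0:2n-1]$. The crux is then to argue that upon termination $\Before{}[j]=\textsf{B}_1(S_j)$ and $\BeforeLCP{}[j]=\textsf{B}_2(S_j)$, i.e.\ that \Before{} and \BeforeLCP{} store exactly the letters immediately preceding occurrences of $S_j$ and of its longest proper prefix. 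Granting this, Lemma~\ref{lem:f1} closes the argument: since every occurrence of $S_j$ is an occurrence of its prefix we have $\textsf{B}_1(S_j)\subseteq\textsf{B}_2(S_j)$, so a letter $a$ lies in $\BeforeLCP{}[j]\setminus\Before{}[j]$ (equivalently, in their symmetric difference) precisely when $a\cdot S_j$ is a minimal absent word; scanning the $2n$ pairs and emitting the tuple $\langle a,(i,j)\rangle$ for each such $a$ therefore reports all and only the minimal absent words, in the format demanded by the problem statement.

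To establish the two array identities I would formulate and verify a loop invariant on the contents of \Interval{} and of the stack \Lifolcp{} at the end of each iteration: after processing index $i$, the cell $\Interval{}[\ell]$ holds the set of letters seen immediately before the length-$\ell$ prefix of $y[\SA{}[i]\dd n-1]$ among the suffixes already scanned, and \Lifolcp{} holds exactly the \LCP{} values of the current chain of prefixes. The top-down pass then accumulates into $\Before{}[j]$ and $\BeforeLCP{}[j]$ the preceding letters from occurrences whose starting positions appear \emph{above} position $i$ in \SA{}; the bottom-up pass, run symmetrically with the auxiliary stack \Liforem{} to temporarily retain the \LCP{} values exceeding $\LCP{}[i]$, completes each set with the contributions from occurrences appearing \emph{below} position $i$, and the bitwise-or of the two passes yields the full sets. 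I expect this to be the main obstacle: preserving the invariant through both steps of an iteration (the popping and zeroing of stale cells, and the propagation of $y[\SA{}[i]-1]$ up the stack until a cell already containing that letter is met) requires a careful case analysis of the stack manipulations and of the boundary cases $\SA{}[i]=0$, $\LCP{}[i]=0$, and changes in \LCP{} between consecutive positions.

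For complexity I would argue as follows. Arrays \SA{}, \iSA{}, and \LCP{} are built in $\cO(n)$ time and space, as recalled in Section~2. Each pass performs a single scan of \SA{} and \LCP{}; at most one value is pushed onto \Lifolcp{} per iteration, so each value is popped at most once, bounding the zeroing of \Interval{} cells by $\cO(n)$ cell-operations, while \Liforem{} adds only a constant-factor overhead in the bottom-up pass. In the propagation step, for each \LCP{} value read a given letter is added to \Interval{} only if absent there, so at most $\sigma$ letters are ever added per value and the total work is $\cO(n\sigma)$. Since $\sigma=\cO(1)$, every operation on a length-$\sigma$ bit-vector cell is $\cO(1)$ and both passes run in $\cO(n)$ time. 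For space, \Before{} and \BeforeLCP{} hold $2n$ cells and \Interval{} holds $\max_{i}\LCP{}[i]+1\le n$ cells, each of $\sigma=\cO(1)$ bits, giving $\cO(n)$ total space. Finally, the reporting step compares the $2n$ pairs of cells in $\cO(n)$ total time and emits each minimal absent word in $\cO(1)$ time via its tuple representation, which yields the claimed $\cO(n)$ time and space bound overall.
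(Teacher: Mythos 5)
Your proposal follows essentially the same route as the paper: Lemma~\ref{lem:f2} for exhaustiveness over the factors $S_{2i}$ and $S_{2i+1}$, Lemma~\ref{lem:f1} applied to the symmetric difference of $\Before{}[j]$ and $\BeforeLCP{}[j]$, the amortised $\cO(n\sigma)$ accounting of the stack operations in the two passes (at most one push per iteration, at most $\sigma$ letter insertions per \LCP{} value), and $\cO(1)$-per-word reporting via the tuple representation, with $\sigma=\cO(1)$ yielding the stated $\cO(n)$ bounds. If anything you are slightly more explicit than the paper, which never formally states the loop invariant on \Interval{} and \Lifolcp{} that you formulate and, like you, does not carry out the case analysis verifying it, so your sketch matches the paper's own level of rigour.
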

\begin{table}
\begin{center}
\scalebox{0.75}{
\begin{tabular}{|*{3}{c|}*{5}{c|}l|l|}\hline
 $j$ & \Before{} & \BeforeLCP{} &\multicolumn{5}{c|}{Factor}& Minimal absent words &Tuple representation\\ \hline
 0 & 11 & 11 &\cellcolor{Thistle}\texttt{A}& & & & & &\\ \hline
 1 & 00 & 11 &\cellcolor{Apricot}\texttt{A}&\cellcolor{Thistle}\texttt{A}& & & & \texttt{AAA, BAA} &$<$\texttt{A}$,(0,1)>$, $<$\texttt{B}$,(0,1)>$\\ \hline
 2 & 11 & 11 &\cellcolor{Apricot}\texttt{A}&\cellcolor{Thistle}\texttt{B}& & & & &\\ \hline
 3 & 10 & 11 &\cellcolor{Apricot}\texttt{A}&\cellcolor{Apricot}\texttt{B}&\cellcolor{Apricot}\texttt{A}&\cellcolor{Apricot}\texttt{B}&\cellcolor{Thistle}\texttt{A}&\texttt{BABABA}&$<$\texttt{B}$,(1,5)>$ \\ \hline
 4 & 01 & 11 &\cellcolor{Apricot}\texttt{A}&\cellcolor{Apricot}\texttt{B}&\cellcolor{Apricot}\texttt{A}&\cellcolor{Apricot}\texttt{B}&\cellcolor{Thistle}\texttt{B}&\texttt{AABABB}&$<$\texttt{A}$,(3,7)>$ \\ \hline
 5 & 11 & 11&\cellcolor{Apricot}\texttt{A}&\cellcolor{Apricot}\texttt{B}&\cellcolor{Thistle}\texttt{A}& & & & \\ \hline
 6 & 01 & 11 &\cellcolor{Apricot}\texttt{A}&\cellcolor{Apricot}\texttt{B}&\cellcolor{Thistle}\texttt{B}& & &\texttt{AABB} &$<$\texttt{A}$,(5,7)>$\\ \hline
 7 & 11 & 11 &\cellcolor{Thistle}\texttt{A}& & & & & & \\ \hline
 8 & 11 & 11 &\cellcolor{Thistle}\texttt{B}& & & & & &\\ \hline
 9 & 01 & 11 &\cellcolor{Apricot}\texttt{B}&\cellcolor{Thistle}
&\multicolumn{5}{l|}{We do not consider this row as it corresponds to the end of the word $y$} \\ \hline
 10 & 10 & 11 &\cellcolor{Apricot}\texttt{B}&\cellcolor{Thistle}\texttt{A}& & & &\texttt{BBA} &$<$\texttt{B}$,(2,3)>$\\ \hline
 11 & 10 & 10 &\cellcolor{Apricot}\texttt{B}&\cellcolor{Apricot}\texttt{A}&\cellcolor{Apricot}\texttt{B}&\cellcolor{Thistle}\texttt{A}& & &\\ \hline
 12 & 10 & 10 &\cellcolor{Apricot}\texttt{B}&\cellcolor{Apricot}\texttt{A}&\cellcolor{Apricot}\texttt{B}&\cellcolor{Thistle}\texttt{B}& & &\\ \hline
 13 & 10 & 11 & \cellcolor{Apricot}\texttt{B}&\cellcolor{Thistle}\texttt{A}& & & &\texttt{BBA}&{This is a duplicate so we ignore it}\\ \hline
 14 & 10 & 11 &\cellcolor{Apricot}\texttt{B}&\cellcolor{Thistle}\texttt{B}& & & &\texttt{BBB}& $<$\texttt{B}$,(6,7)>$\\ \hline
\end{tabular}
}
  \end{center}
  \caption{Minimal absent words of word $y=\texttt{AABABABB}$; we find seven minimal absent words $\{\texttt{AAA, AABABB, AABB, BAA, BABABA, BBA, BBB}\}$}
  \label{tab:maws}
\end{table}
\section{Experimental Results}
\label{sec:exp}
  We implemented algorithm $\textsf{MAW}$ as a programme to compute all minimal absent words of a given sequence. 
  The programme was implemented in the $\textsf{C}$ programming language and developed under GNU/Linux operating system. 
  It takes as input arguments a file in (Multi)FASTA format and the minimal and maximal length of minimal absent words to be outputted; and then 
  produces a file with all minimal absent words of length within this range as output.
  The implementation is distributed under the GNU General Public License (GPL), and it is available at \url{http://github.com/solonas13/maw}, which is set up for maintaining the source code and the man-page documentation.
  The experiments were conducted on a Desktop PC using one core of Intel Xeon E5540 CPU at 2.5 GHz and 32GB of main memory under 64-bit GNU/Linux.
  We considered the genomes of thirteen bacteria and four case-study eukaryotes (Table~\ref{tab:data}), all obtained from the NCBI database (\url{ftp://ftp.ncbi.nih.gov/genomes/}).

\begin{table}[!ht]
\begin{center}
\scalebox{0.85}{
\begin{tabular}{lll} \hline
Species							& Abbreviation    	& Genome reference\\ \hline
{\bf Bacteria}     					&   	      		& 	\\ 
{\em Bacillus anthracis strain Ames}     		& Ba  	      		& NC003997	\\
{\em Bacillus subtilis strain 168}    			& Bs  	      		& NC000964	\\
{\em Escherichia coli strain K-12 substrain MG1655} 	& Ec          		& NC000913     	\\ 
{\em Haemophilus influenzae strain Rd KW20}     	& Hi  	      		& NC000907	\\
{\em Helicobacter pylori strain 26695}     		& Hp  	      		& NC000915 	\\
{\em Lactobacillus casei strain BL23}    		& Lc          		& NC010999      \\ 
{\em Lactococcus lactis strain Il1403}    		& Ll          		& NC002662      \\ 
{\em Mycoplasma genitalium strain G37}     		& Mg  	      		& NC000908	\\
{\em Staphylococcus aureus strain N315}   		& Sa  	      		& NC002745	\\ 
{\em Streptococcus pneumoniae strain CGSP14} 		& Sp          		& NC010582     	\\ 
{\em Xanthomonas campestris strain 8004}     		& Xc  	      		& NC007086	\\ \hline
{\bf Eukaryotes}     					&   	      		& 		\\ 
{\em Arabidopsis thaliana (thale cress)}		& At			& AGI release 7.2 \\
{\em Drosophila melanogaster (fruit fly})		& Dm			& FlyBase release 5 \\
{\em Homo sapiens (human)}    				& Hs          		& build 38      \\ 
{\em Mus musculus (mouse)}    	 			& Mm  	      		& build 38 	\\\hline

\end{tabular}
}
\end{center}
\caption{Species selected for this work with reference to the respective
abbreviation and identification of genome sequence data by accession 
number for bacteria or genome assembly project for eukaryotes}
\label{tab:data}
\end{table}

  To test the correctness of our implementation, we compared it against
the implementation of Pinho et al.~\cite{Pinho2009}, which we denote here
by \textsf{PFG}.
  In particular, we counted the number of minimal absent words, for 
lengths $11$, $14$, $17$, and $24$, in the genomes of the thirteen 
bacteria listed in Table~\ref{tab:data}.
  We considered only the $5'\rightarrow3'$ DNA strand. 
  Table~\ref{tab:correct} depicts the number of minimal absent words in 
these sequences. 
  We denote by $\textsf{M}_{11}$, $\textsf{M}_{14}$, $\textsf{M}_{17}$, 
and $\textsf{M}_{24}$ the size of the resulting sets of minimal absent
words for lengths $11$, $14$, $17$, and $24$ respectively. 
  Identical number of minimal absent words for these lengths were also
reported by \textsf{PFG}, suggesting that our implementation is correct.

\begin{table}[!ht]
\vspace{0.25cm}
\begin{center}
\scalebox{0.85}{
\begin{tabular}{llllll} \hline
Species& Genome size (bp)    		&$\textsf{M}_{11}$	& $\textsf{M}_{14}$	& $\textsf{M}_{17}$	& $\textsf{M}_{24}$\\ \hline
Ba     & 5,227,293 	      		& 1,113,398	&  1,001,357		& 32,432		&  46	         \\
Bs     & 4,214,630  	      		& 951,273	&  1,703,309		& 86,372		&  226		 \\ 
Ec     & 4,639,675          		& 1,072,074     &  1,125,653		& 36,395		&  247		 \\ 
Hi     & 1,830,023  	      		& 722,860	&  294,353		& 12,158		&  91		 \\
Hp     & 1,667,825  	      		& 564,308 	&  336,122		& 19,276		&  75		 \\
Lc     & 3,079,196         		& 1,126,363     &  502,861		& 13,083		&  246		 \\
Ll     & 2,365,589  	      		& 764,006 	&  507,490		& 25,667		&  183		 \\
Mg     & 1,664,957          		& 246,342      	&  66,324		& 2,737			&  28		 \\
Sa     & 2,814,816         		& 755,483      	&  704,147		& 32,054		&  138		 \\
Sp     & 2,209,198  	      		& 904,815 	&  327,713		& 10,390		&  234		 \\
Xc     & 5,148,708          		& 804,034      	&  1,746,214		& 179,346		&  633		 \\\hline
\end{tabular}
}
\end{center}
\caption{Number of minimal absent words of lengths $11$, $14$, $17$, 
and $24$ in the genomes of thirteen bacteria.}
\label{tab:correct}
\end{table}

  To evaluate the efficiency of our implementation, we compared it 
against the corresponding performance of \textsf{PFG}, which is currently
the fastest available implementation for computing minimal absent words.
  We computed all minimal absent words for each chromosome sequence of 
the genomes of the four eukaryotes listed in Table~\ref{tab:data}.
  We considered both the $5'\rightarrow3'$ and the $3'\rightarrow5'$ 
DNA strands.
  Tables~\ref{tab:time1} and~\ref{tab:time2} depict elapsed-time 
comparisons of \textsf{MAW} and \textsf{PFG}.
  \textsf{MAW} scales linearly and is the fastest in {\em all} cases.
  It accelerates the computations by more than a factor of $2$, when 
the length of the sequences grows, compared to \textsf{PFG}.
  \textsf{MAW} also reduces the memory requirements by a factor of $5$ compared to \textsf{PFG}.
  The maximum allocated memory (per task) was 6GB for \textsf{MAW} and 30GB for \textsf{PFG}.

\begin{table}[!ht]
\vspace{0.25cm}
\begin{center}
\subfloat[At]
{
\scalebox{0.85}{
\begin{tabular}{llll} \hline
Chromosome			& Size (bp)			& \textsf{MAW} (s)	& \textsf{PFG} (s)\\ \hline
1	     			&30,427,671				& 40.20	      		& 51.90	\\
2     				&19,698,289				& 25.86	      		& 32.94	\\
3     				&23,459,830				& 30.84	      		& 42.30	\\
4     				&18,585,056				& 24.65	      		& 31.42	\\
5     				&26,975,502				& 35.38	      		& 48.91	\\ \hline
\end{tabular}
}
}
\qquad
\subfloat[Dm]
{
\scalebox{0.85}{
\begin{tabular}{llll} \hline
Chromosome			& Size (bp)		& \textsf{MAW} (s)	& \textsf{PFG} (s)\\ \hline 
2L     				&23,011,544			& 30.01	      		& 40.85	\\
2R     				&21,146,708			& 27.52	      		& 38.38	\\
3L     				&24,543,557			& 32.00	      		& 45.13	\\
3R     				&27,905,053			& 36.44	      		& 48.36	\\
X     				&22,422,827			& 29.38	      		& 40.09	\\ \hline
\end{tabular}
}
}
\end{center}
\caption{Elapsed-time comparison of \textsf{MAW} and \textsf{PFG} 
for computing all minimal absent words in the genome of 
{\em Arabidopsis thaliana} and {\em Drosophila melanogaster}}
\label{tab:time1}
\end{table}

\begin{table}[!ht]
\vspace{0.25cm}
\begin{center}
\subfloat[Hs]
{
\scalebox{0.85}{
\begin{tabular}{llll} \hline
Chromosome			& Size (bp)	& \textsf{MAW} (s)	& \textsf{PFG} (s)\\ \hline
1     				&248,956,422	& 426.39  	      		& 972.52	\\
2     				&242,193,529	& 423.19  	      		& 772.89	\\
3     				&198,295,559	& 353.60  	      		& 645.45	\\
4     				&190,214,555	& 339.02  	      		& 616.26	\\
5     				&181,538,259	& 342.53  	      		& 577.05	\\   
6     				&170,805,979	& 299.72  	      		& 538.34	\\
7     				&159,345,973	& 305.26  	      		& 491.32	\\
8     				&145,138,636	& 254.17  	      		& 437.18	\\
9     				&138,394,717	& 235.14  	      		& 356.08	\\
10     				&133,797,422	& 235.38  	      		& 392.45	\\
11     				&135,086,622	& 236.80  	      		& 379.15	\\
12     				&133,275,309	& 235.14  	      		& 390.46	\\
13     				&114,364,328	& 191.64  	      		& 269.52	\\
14     				&107,043,718	& 178.00  	      		& 240.93	\\   
15     				&101,991,189	& 167.89  	      		& 222.98	\\
16     				&90,338,345	& 153.07  	      		& 198.49	\\
17     				&83,257,441	& 144.32  	      		& 207.02	\\
18     				&80,373,285	& 137.68  	      		& 199.44	\\
19     				&58,617,616	& 100.95  	      		& 126.82	\\
20     				&64,444,167	& 109.80  	      		& 144.83	\\
21     				&46,709,983	& 74.60  	      		& 74.65		\\
22     				&50,818,468	& 70.49  	      		& 73.34		\\ 
X     				&156,040,895	& 275.14  	      		& 457.2		\\ 
Y     				&57,227,415	& 60.85  	      		& 62.34		\\ \hline
\end{tabular}
}
}
\qquad
\subfloat[Mm]
{
\scalebox{0.85}{
\begin{tabular}{llll} \hline
Chromosome			& Size (bp)	& \textsf{MAW} (s)	& \textsf{PFG} (s)\\ \hline
1     				&197,195,432	& 340.59  	      		& 599.86	\\
2     				&181,748,087	& 316.17  	      		& 578.2	\\
3     				&159,599,783	& 274.46  	      		& 506.73	\\
4     				&155,630,120	& 266.67  	      		& 473.97	\\
5     				&152,537,259	& 260.50  	      		& 424.24	\\   
6     				&149,517,037	& 256.36  	      		& 455.11	\\
7     				&152,524,553	& 257.65  	      		& 413.37	\\
8     				&131,738,871	& 223.09  	      		& 344.92	\\
9     				&124,076,172	& 210.37  	      		& 334.25	\\
10     				&129,993,255	& 222.36  	      		& 363.34	\\
11     				&121,843,856	& 208.55  	      		& 324.54	\\
12     				&121,257,530	& 205.09  	      		& 324.79	\\
13     				&120,284,312	& 204.80  	      		& 314.56	\\
14     				&125,194,864	& 212.59  	      		& 336.49	\\   
15     				&103,494,974	& 175.21  	      		& 265.92	\\
16     				&98,319,150	& 166.10  	      		& 249.03	\\
17     				&95,272,651	& 160.70  	      		& 232.79	\\
18     				&90,772,031	& 153.40  	      		& 223.56	\\
19     				&61,342,430	& 101.89  	      		& 125.85	\\
X     				&166,650,296	& 282.21  	      		& 503.98	\\ 
Y     				&91,744,698	& 141.79  	      		& 251	\\ \hline
\end{tabular}
}
}
\end{center}
\caption{Elapsed-time comparison of \textsf{MAW} and \textsf{PFG} 
for computing all minimal absent words in the genome of {\em Homo
Sapiens} and {\em Mus musculus}}
\label{tab:time2}
\end{table}

  To further evaluate the efficiency of our implementation, we compared
it against the corresponding performance of \textsf{PFG} using synthetic
data.
  As basic dataset we used chromosome 1 of Hs. 
  We created five instances $\textsf{S}_{1}$, $\textsf{S}_{2}$, 
$\textsf{S}_{3}$, $\textsf{S}_{4}$, and $\textsf{S}_{5}$ of this
sequence by randomly choosing 10\%, 20\%, 30\%, 40\%, and 50\% of 
the positions, respectively, and randomly replacing the corresponding
letters to one of the four letters of the DNA alphabet.
  We computed all minimal absent words for each instance. 
  We considered both the $5'\rightarrow3'$ and the $3'\rightarrow5'$ 
DNA strands.
  Table~\ref{tab:time3} depicts elapsed-time comparisons of 
\textsf{MAW} and \textsf{PFG}.
  \textsf{MAW} is the fastest in {\em all} cases.

\begin{table}[!ht]
\vspace{0.25cm}
\begin{center}
\scalebox{0.85}{
\begin{tabular}{llll} \hline
Sequence					& Size (bp)			& \textsf{MAW} (s)	& \textsf{PFG} (s)\\ \hline
$\textsf{S}_{1}$	     			&248,956,422			& 435.63      		& 746.93\\
$\textsf{S}_{2}$     				&248,956,422			& 438.52      		& 733.69\\
$\textsf{S}_{3}$     				&248,956,422			& 444.62      		& 726.34\\
$\textsf{S}_{4}$     				&248,956,422			& 444.06      		& 743.29\\
$\textsf{S}_{5}$     				&248,956,422			& 449.25      		& 741.01\\ \hline
\end{tabular}
}
\end{center}
\caption{Elapsed-time comparison of \textsf{MAW} and \textsf{PFG}
for computing all minimal absent words in synthetic data}
\label{tab:time3}
\end{table}
\section{Final Remarks}
\label{sec:conc}
  We presented the first $\cO(n)$-time and $\cO(n)$-space algorithm for computing all minimal absent words based on the construction of suffix arrays. 
  Experimental results show that the respective implementation outperforms existing tools.
\bibliographystyle{plain}
\bibliography{references}
\end{document}